\newcommand{\bit}{\begin{itemize}}
\newcommand{\eit}{\end{itemize}}
\newcommand{\bd}{\begin{description}}
\newcommand{\ed}{\end{description}}
\newcommand{\ben}{\begin{enumerate}}
\newcommand{\een}{\end{enumerate}}
\newcommand{\bqn}{\begin{equation*}\begin{aligned}}
\newcommand{\eqn}{\end{aligned}\end{equation*}}
\newcommand{\bqnn}{\begin{equation}\begin{aligned}}
\newcommand{\eqnn}{\end{aligned}\end{equation}}
\newcommand{\bt}{\begin{thm}}
\newcommand{\et}{\end{thm}}
\newcommand{\bl}{\begin{lem}}
\newcommand{\el}{\end{lem}}
\newcommand{\bp}{\begin{prop}}
\newcommand{\ep}{\end{prop}}
\newcommand{\bc}{\begin{cor}}
\newcommand{\ec}{\end{cor}}
\newcommand{\bdefn}{\begin{defn}}
\newcommand{\edefn}{\end{defn}}
\newcommand{\brem}{\begin{rem}}
\newcommand{\erem}{\end{rem}}
\newcommand{\bproof}{\begin{proof}}
\newcommand{\eproof}{\end{proof}}
\newcommand{\bex}{\begin{ex}}
\newcommand{\eex}{\end{ex}}
\newcommand{\R}{\mathbb{R}}
\newcommand{\bcs}{\begin{cases}}
\newcommand{\ecs}{\end{cases}}
\newcommand{\expp}{\text{e}}
\newcommand{\se}{\text{SE}(2)}
\newcommand{\INT}{\int\limits}
\newcommand{\SUM}{\sum\limits}
\newcommand{\abs}[1]{\lvert {#1} \rvert}
\begin{document}

\title{A geometric model of multi-scale orientation preference maps via Gabor functions
}
%

\author{E. Baspinar        \and
        G. Citti		 \and 
        A. Sarti     
}


\institute{Universit\`{a} di Bologna-Dipartimento di Matematica and CAMS/CNRS-EHESS, Paris} 
           

\date{Received: date / Accepted: date}

\maketitle

\section{Introduction}
\graphicspath{{figures/JMIV_June_2017/}}
As it is well known simple cells in the primary visual cortex are organized in structures called orientation preference maps. This special organization has been studied with geometric instruments starting by the work of Petitot and Tondut \cite{petitot1999vers}. In that study orientation maps  were obtained as the superposition of randomly weighted orientation fields corresponding to all possible orientation angles around the pinwheels (see the geometric explanations of Petitot \cite{petitot2003neurogeometry} related to the pinwheels). A different model, always based only on orientation was introduced by Barbieri et al. \cite{barbieri2011coherent}, \cite{barbieri2012uncertainty}. In that paper the orientation preference structure was recovered starting from the observation that its Fourier transform is concentrated on an annulus. This model as the previous recalled one, is based on properties apparently independent of the other aspects of the cortical models.  Additionally to those studies, the models, in terms of its cortical orientation and orientation-frequency selectivity, which were provided by Bressloff and Cowan \cite{bressloff2003functional}, \cite{bressloff2001geometric} and the model proposed for the cortical spatio-temporal selective behavior by Barbieri et. al. \cite{barbieri2014cortical} could be useful references for the reader.

In this article we present a new model for the generation  of orientation preference maps, considering both orientation and scale features. Hypercolumns of the simple cell receptive profiles are the fundamental units of the set of receptive profiles and they build a 2-dimensional sub-group of rotation-dilation at each point $(x,y)$ of the retinal plane $M\subset \R^2$. In other words the base variables are the spatial components $(x,y)\in M$ and the intrinsic variables are orientation and scale parameters $(\theta, \sigma)\in [0, \pi )\times \R^+$. Having two intrinsic variables in hand, we can either fix scale and obtain the orientation map of simple cells or we can employ a range of scale values and obtain a multiscale orientation map. In this way the model integrates several visual features observed in neurophysiology, psychophysics and neuroimaging experiments and provides a more precise orientation map.

The main novelty of our approach is that the orientation map description is strongly related to the functionality of the cortex, and simple cell responses in presence of a visual stimulus. Indeed we start with a random stimulus $I$ on the retinal plane, and obtain the responses of the cells through a linear filtering with translated, rotated and dilated Gabor functions. Finally, we employ integration of the output over fiber and maximum selection in order to select the prevalent orientation and scale. This whole procedure starting with obtaining the simple cell responses and ending with application of integration over fiber and maximum selection over the full set of receptive profiles is called lifting. Consequently we propose to obtain orientation maps by employing a lifting of noise stimulus through the functional structure of the cortex. We will outline that this corresponds to a Bargmann transform \cite{bargmann1967hilbert} in the reducible representation of the $\se$ group, which is different than the case in \cite{barbieri2011coherent} where Barbieri et. al. considered the irreducible representation. Hence our model is neural based. 

The theoretical criterion underpinning the modeling we propose in this paper relies on the so-called neurogeometrical approach described by Citti and Sarti \cite{citti2006cortical}, Petitot and Tondut \cite{petitot1999vers}, Sarti et. al. \cite{sarti2008symplectic}. Following this approach, processing capabilities of sensorial cortices and particularly of the visual cortex are modeled based on the geometrical structure of neural connectivity.
Global and local symmetries of the visual stimuli are inherited by the cortical structure that presents their invariances (see Sanguinetti et. al. \cite{sanguinetti2010model}). Then the structure is defined on group of invariances that are also spaces, meaning Lie groups. Particularly the simple cells are sensitive to local position and orientation features of stimuli, which are elements of the roto-translation group $\se$. The corresponding Lie algebra and its integral curves model neural connectivity between cells. Moreover, since the algebra is not commutative, it is possible to pose an uncertainty principle, whose minimization gives rise to the shape of receptive profiles of the simple cells. The model has been extended to other variables such as scale by Sarti et. al. \cite{sarti2008symplectic}, and to other cell types such as complex cells sensitive to movement by Barbieri et. al. \cite{barbieri2014cortical} and Cocci et. al. \cite{cocci2015cortical}. In \cite{citti2014neuromathematics} and \cite{sarti2015constitution},  a neurogeometrical field theory has been introduced by Sarti and Citti to model connectivity between different cortices and it has been shown that harmonic analysis on the neurogeometry excited by the stimulus accounts for the constitution of perceptual units, while in \cite{sarti2015individuation} semiotic forms have been obtained through the same principle by Sarti and Piotrowski.

Orientation maps of V1 have been introduced in \cite{barbieri2011coherent} by Barbieri et. al. as Bargmann transform in the irreducible representation of $\se$, while in the present article here a model of orientation maps is proposed in terms of a reducible representation, that is more neurophysiologically plausible. Then all the principal morphologies present in the visual cortex are modeled in a compact way in the neurogeometrical framework.

To our knowledge none of the other approaches (such as orientation map construction methods proposed by Barbieri et.al. \cite{barbieri2012uncertainty}, Petitot \cite{petitot2003neurogeometry}, multi-scale approach of Linderberg \cite{lindeberg1998feature} and other methods proposed based on differential geometry by Franken et.al. \cite{franken2007nonlinear}, Ben-Shahar and Zucker \cite{ben2004geometrical}) is able to cover such a variety of forms and visual phenomena starting from the very first principles.

As a general consideration about the choice of the receptive profile model, let us recall that receptive field models consisting of cascades of linear filters and static non-linearities may be adequate to account for responses to simple stimuli such as gratings and random checkerboards, but their predictions of responses to complex stimuli such as natural scenes are only approximately correct. A variety of mechanisms such as response normalization, gain controls, cross-orientation suppression, intra-cortical modulation can intervene to change radically the shape of the profile. Then any static and linear model for receptive profiles has to be considered just as a very first approximation of the complex behavior of a real dynamic receptive profile, which is not perfectly described by any of the static wavelet frames.

For example derivatives or difference of Gaussian functions are very good approximations of the behavior of classical receptive profiles of the simple cells. Let us outline that such families of Gaussian functions account just for symmetric receptive profiles, while shift in phase is not considered. This could induce to think that generalized Gabor filtering is more flexible. On the other hand it is true that the majority of receptive profiles of simple cells in the primary visual cortex are even and odd symmetric, and it is an open issue to evaluate the importance to discard a minority of asymmetric profiles.

In the specific model which we propose in this article, we have used only the Gabor filters without any shift in phase. In this case the Gabors can be easily replaced with derivatives of Gaussians, without loss of generality. However the choice we made based on Gabors allows to extend the model to the true distribution of profiles in the primary visual cortex (including asymmetric receptive profiles with shifts in phase), i.e., to a neurophysiologically coherent generic model of the visual cortex, which is not possible with derivatives of Gaussian functions. The reader can find more information about some models employing alternative choices of receptive profiles in terms of Gaussian derivatives in the works of Koenderink \cite{koenderink1984structure}, \cite{koenderink1987representation} and Lindeberg \cite{lindeberg2013computational}.

We test the model at different scales, in order to represent properties of orientation maps in different cortical areas where the scale of the receptive profile changes. Our simulation results are compared with neural experimental results. A comparison will be provided with a previous model based on the Bargmann transform in the irreducible representation of the $\se$ group, outlining that the new model is more physiologically motivated.
Moreover we remark that it is possible to extend the model in order that additional visual features such as frequency and phase are taken into account.

In Section \ref{sec:sec2} we explain receptive profiles of simple cells and describe the group structure with its geometrical properties. Then we give explicitly the procedure of the construction of cortical map in Section \ref{sec:sec_Procedure}. Afterwards, in Section \ref{sec:bargmannTransform}, we show that Gabor functions are minimizers of an uncertainty principle and the filtering with the Gabors can be interpreted as a Bargmann transform in reducible representations. Then we provide the relation of Bargmann transform to the orientation map construction procedure and we compare it to another method using the Bargmann transform with Gabor functions in irreducible representations on the Fourier domain. Finally, in Section \ref{sec:secExperiments} we present the simulation results of the model and compare them to experimental results given in the literature.

\section{Receptive profiles of simple cells}
\label{sec:sec2}

\subsection{Receptive fields and receptive profiles}

The simple cells of visual areas evoke impulse responses to stimulus applied on the retinal plane $M\subset \R^2$. Every simple cell is identified by its \emph{receptive field} (RF) which is defined as the domain of the retina to which the cell is sensitive and connected through the retino-geniculo-cortical paths. Once a RF is stimulated it evokes a spike response. 

In classical sense a RF contains on and off regions, i.e., positive and negative contrast regions, respectively. The decomposition of RF into those regions depends on the nature of the cell response given to light and dark luminance Dirac stimulations. The response is realized by the simple cell receptive profile. Receptive profile (RP) of a simple cell is defined on RF and it is simply the impulse response of the cell. Conceptually it is the measurement of the response of the corresponding cell to a stimulus at some point $(x,y)\in M$. We denote the RP at the retinal position $(x,y)\in M$ with orientation $\theta\in [0,\pi)$ and scale $\sigma\in\R^+$ by $\Psi_{(x,y,\theta,\sigma)}:M\times [0, \pi)\times\R^+\rightarrow \mathbb{C}$. The simple cells of the primary visual cortex are strongly oriented and they are sensitive to several visual features, in particular to orientation and scale. Their RPs are often interpreted as Gabor functions \cite{gabor1946theory} since Gabor functions are mathematically convenient for encoding such features as Daugman \cite{daugman1985uncertainty} explained based on a minimum uncertainty condition. In the literature other models employing alternative choices of RPs in terms of Gaussian derivatives were proposed as well, following the works of Koenderink \cite{koenderink1984structure}, \cite{koenderink1987representation} where he pointed out the resemblance between simple cell receptive profiles and Gaussian derivative kernels. The reader can refer to Lindeberg \cite{lindeberg2013computational} where he proposed a family of functions in terms of Gaussian derivatives as a natural choice of the simple cell receptive profile with respect to certain symmetry properties.


\subsection{The set of receptive profiles}
\label{sec:The_set_of_receptive}

Once the retinal layer is activated by some visual stimulus $I(x_0,y_0)\in \R$, at the point $(x_0,y_0)\in M$ the simple cells process the retinal stimulus through their RPs which are denoted by $\Psi_{(x_0,y_0)}$ where the sub-index refers to the corresponding spatial position on $M$ at which $\Psi$ is centered. Each RP at the point $(x_0,y_0)$ is dependent on a preferred orientation $\theta$ and a scale $\sigma\in\R^+$ (see Figure \ref{fig:Gabor_Rotated_Real_2} and Figure \ref{fig:Gabors_with_different_scales}). The set of RPs is obtained through translation to the point $(x_0,y_0)$ and rotation by $\theta$, i.e.,
\begin{align}\label{eq:coordTransform}
\begin{split}
T_{(x_0,y_0,\theta,\sigma)}(\xi,\eta)= &
\begin{pmatrix}
x_0 \\ y_0
\end{pmatrix}+\expp^{\sigma}\begin{pmatrix}
\cos(\theta) & \;-\sin(\theta)\\
\sin(\theta) & \;\cos(\theta)
\end{pmatrix}
\begin{pmatrix}
\xi \\ \eta
\end{pmatrix}
\\ = & (x,y),
\end{split}
\end{align}
applied on the Gabor mother function
\begin{equation}\label{eq:motherGaborPrimitive}
\Psi_0(\xi,\eta)=\expp^{-(\xi^2+\eta^2)}\expp^{i2\eta}.
\end{equation}

General expression of Gabor functions obtained from the mother function is given by
\begin{equation}\label{eq:theoreticalPartGabor}
\Psi_{(x_0,y_0,\theta,\sigma)}(x,y)=\Psi_0(T^{-1}_{(x_0,y_0,\theta,\sigma)}(x,y)).
\end{equation}

\graphicspath{{figures/JMIV_June_2017/}}
\begin{figure}
\centerline{\includegraphics[scale=0.2,trim={0cm 0cm 0cm 0cm},clip]{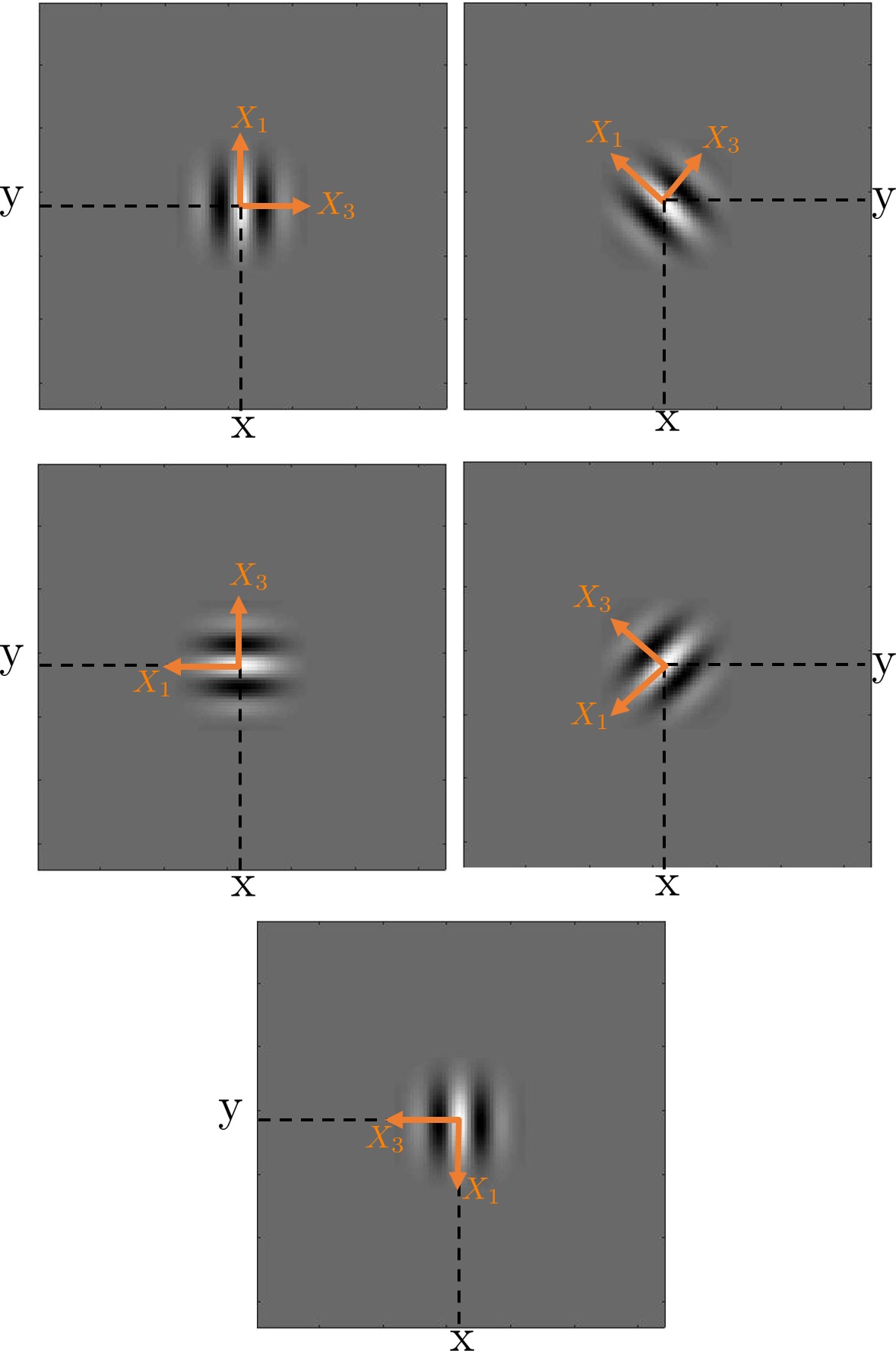}}
\caption{Real (even) part of rotated Gabor filter $\Psi_{(x_0,y_0,\theta,\sigma)}$ centered at $(x_0,y_0)$, with scale $\sigma=8$ and $\theta=0$ (top left), $\theta=\frac{\pi}{4}$ (top right), $\theta=\frac{\pi}{2}$ (middle left), $\theta=\frac{3\pi}{4}$ (middle right), $\theta=\pi$ (bottom). The direction $X_3$ is the image gradient direction while $X_1$ is the tangent direction. }
\label{fig:Gabor_Rotated_Real_2}
\end{figure}

\begin{figure}
\centerline{\includegraphics[scale=0.2,trim={0cm 0cm 0cm 0cm},clip]{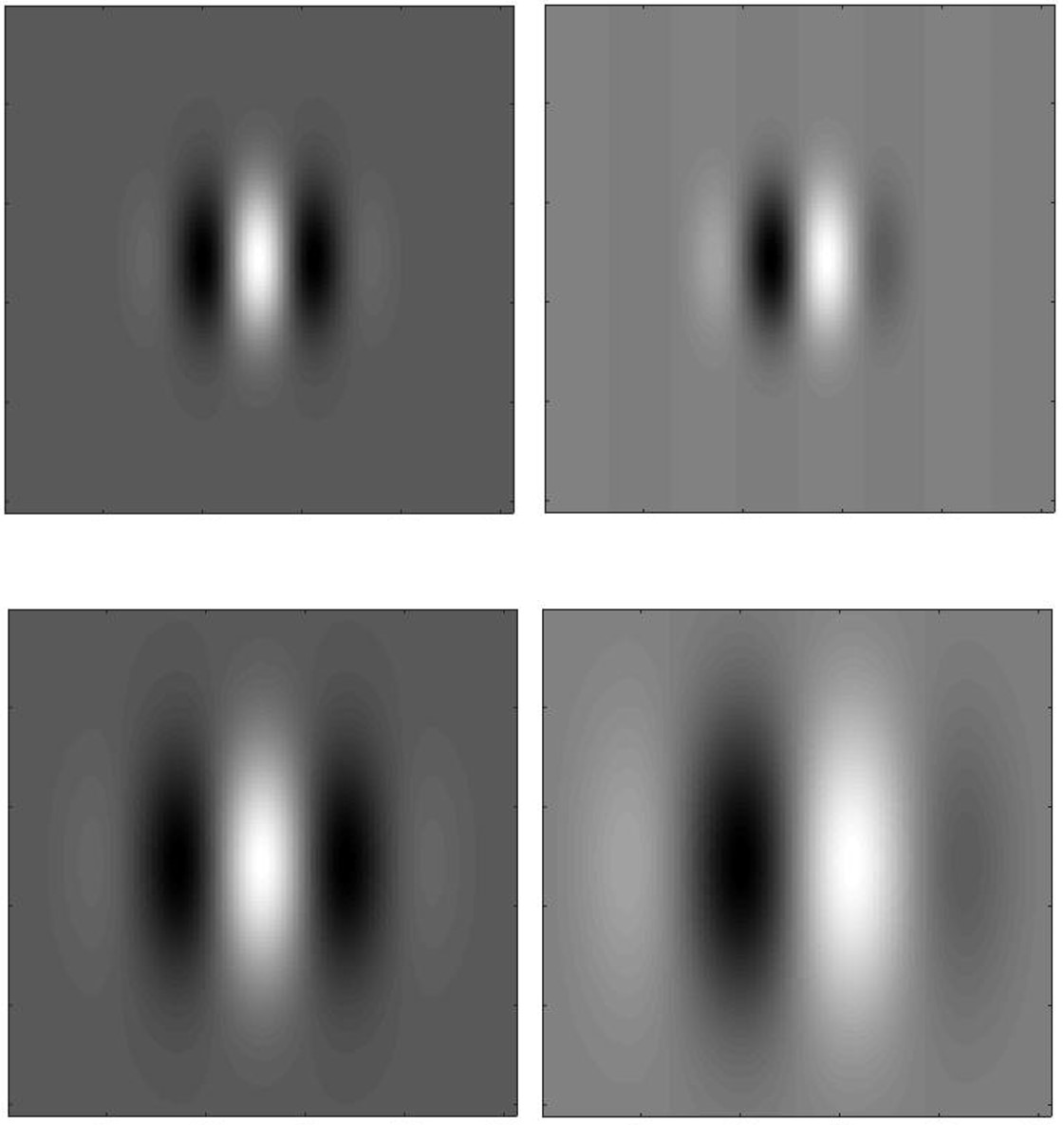}}
\caption{Real (or even, left column) and imaginary (or odd, right column) parts of the Gabor filters with different scales (top and bottom rows).}
\label{fig:Gabors_with_different_scales}
\end{figure}

Note that we find the transformation law of the group
\begin{equation}
G\simeq \{T_{(x_0,y_0,\theta_0,\sigma_0)}: (x_0,y_0,\theta,\sigma)\in \R^2\times [0,\pi)\times \R^+ \},
\end{equation}
by applying the coordinate transform given by \eqref{eq:coordTransform} successively as follows:
\begin{align}
\begin{split}
\, & T_{(x_1, y_1, \theta_1,\sigma_1)} T_{(x_0, y_0,\theta, \sigma)}(\xi,\eta)\\
= & \begin{pmatrix}
x_1 \\ y_1
\end{pmatrix}+\expp^{\sigma_1} R_{\theta_1}\begin{pmatrix}
x_0 \\ y_0
\end{pmatrix}+\expp^{\sigma_1+\sigma}R_{\theta_1+\theta}\begin{pmatrix}
\xi \\ \eta
\end{pmatrix}\\
= & T_{(x_2, y_2, \theta_2, \sigma_2)}(\xi,\eta),
\end{split}
\end{align}
where $R_{\theta}$ represents the rotation matrix and
\begin{align}
\begin{split}
\sigma_2=\sigma_1+\sigma,\quad \theta_2=\theta_1+\theta,\\
\begin{pmatrix}
x_2 \\ y_2 
\end{pmatrix}=\begin{pmatrix}
x_1 \\ y_1
\end{pmatrix}+\expp^{\sigma_1}R_{\theta_1}\begin{pmatrix}
x_0 \\ y_0
\end{pmatrix}.
\end{split}
\end{align}

We refer to the explanations provided by Sarti et. al. in \cite{sarti2008symplectic} for more details.

\subsection{Functional connectivity of the cortex}

The hypercolumns are endowed with internal isotropic short range connections which we specifically call \emph{vertical connections}. The vertical connections do not provide the inter-hypercolumnar interactions and without such inter-connections, the hypercolumns located at different retinal points $(x_0,y_0)\in M$ would be isolated from each other. We know from the neurophysiological results (see the works of Bosking et. al. \cite{bosking1997orientation}, Das and Gilbert \cite{das1995long}) that there are long ranged, strongly anisotropic connections between hypercolumns. This second type of connections within the primary visual cortex is called \emph{horizontal connections}. They play the main role in inter-columnar information flow, i.e., contour integration and image inpainting.

Moreover Bosking et al. \cite{bosking1997orientation} observed that the horizontal connections link preferentially the simple cells at different spatial locations $(x_0,y_0)\in M$ but corresponding to the same orientation (approximately). In other words, the horizontal connections characterize the contour integration along the aligned curve fragments with approximately same orientations, respecting the \emph{saliency} (see the saliency description of Wertheimer \cite{wertheimer1938laws}) of the global structure obtained through the integration. Contour integration in a salient way is closely related to the existence of specific connectivities within the primary visual cortex, which are named as \emph{association fields} by Field et al. \cite{field1993contour} confirming the anisotropic behavior of the horizontal connections in the psychophysical level.

In order to implement this functional connectivity we associate to each receptive profile $\Psi_{(x_0,y_0,\theta,\sigma)}$ the following 1-form
\begin{equation}
\omega_{(\theta,\sigma)}=\expp^{-\sigma}(-\sin(\theta)dx+\cos(\theta)dy),
\end{equation}
where $dx,\,dy\in T^{\ast}M$ represent the covector fields dual to the vector fields $\partial_x,\partial_y\in TM$. The 1-form $\omega$ is the main instrument describing the orientation selectivity of a simple cell since it selects the direction along the vector field
\begin{equation}
X_3=\expp^{\sigma}(-\sin(\theta)\partial_x+\cos(\theta)\partial_y),
\end{equation}
and the vector $X_3\vert_{(x_0,y_0,\theta,\sigma)}$ at point $(x_0,y_0)$ gives the image gradient at that point corresponding to the receptive profile $\Psi_{(x_0,y_0,\theta,\sigma)}$. The direction along $X_3$ is associated with the orientation angle which the simple cells at $(x_0, y_0)$ are sensitive to (see also Figure \ref{fig:Gabor_Rotated_Real_2}). Furthermore, with the additional exponential $\expp^{-\sigma}$, the 1-form $\omega$ weights the contour fragment at $(x_0, y_0)$, lying orthogonal to $X_3$, in such a way that the fragment corresponding to the same scale as $\omega$ produces the highest simple cell response magnitude. In short $\omega_{(\theta, \sigma)}$ is the main instrument which renders both orientation and scale selectivity of the primary visual cortex simple cells. 

Finally we find the horizontal left invariant vector fields as
\begin{equation}
\operatorname{ker}\omega=\operatorname{span}\{X_1, X_2, X_4 \},
\end{equation}
where
\begin{align}\label{eq:horizontalVectorFields}
\begin{split}
X_1= & \expp^{\sigma}(\cos(\theta)\partial_x+\sin(\theta)\partial_y),\\
X_2= & \partial_{\theta},\\
X_4= & \partial_{\sigma}.
\end{split}
\end{align}
Here we note that due to the fact that
\begin{align}
\begin{split}
[X_1, X_2]= & -X_3,\\
[X_1, X_4]= & -X_1,
\end{split}
\end{align}
the horizontal vector fields are non-commutative. Yet they span the whole tangent bundle together with their commutators, i.e.,
\begin{equation}
TM=\operatorname{span}\{X_1, X_2, X_4, [X_1,X_2]\}.
\end{equation}
That is, the horizontal vector fields given by \eqref{eq:horizontalVectorFields} fulfill the H\"{o}rmander condition \cite{hormander1967hypoelliptic}.



\section{The model of multi-scale orientation maps}
\label{sec:sec_Procedure}

In this section we present our model of orientation cortical maps. 
As we explained in the introduction, we propose that cortical maps 
are obtained via a two step procedure: First the simple cells act on a random stimulus, 
and consequently maximally activated orientation and scale are selected, producing the cortical map.

\label{sec:The_lifting_mechanism}
 The response given to a stimulus by a simple cell with orientation preference $\theta$, scale $\sigma$ and located at $(x_0, y_0)\in M$ is expressed by
\begin{align}\label{eq:filteringExpressionFin}
\begin{split}
O_{(\theta,\sigma)}(x_0,y_0)=&\INT_M I(x,y)\Psi_{(x_0,y_0,\theta,\sigma)}(x,y)\,dx\,dy.
\end{split}
\end{align}
See Figure \ref{fig:receptive_profile_full_response} for a visualization of such outputs. For every retinal point $(x_0,y_0)$ a particular value of orientation is selected via integration on the fiber:

\graphicspath{{figures/JMIV_June_2017/}}
\begin{figure}
\centerline{\includegraphics[scale=0.4,trim={0cm 0cm 0cm 0cm},clip]{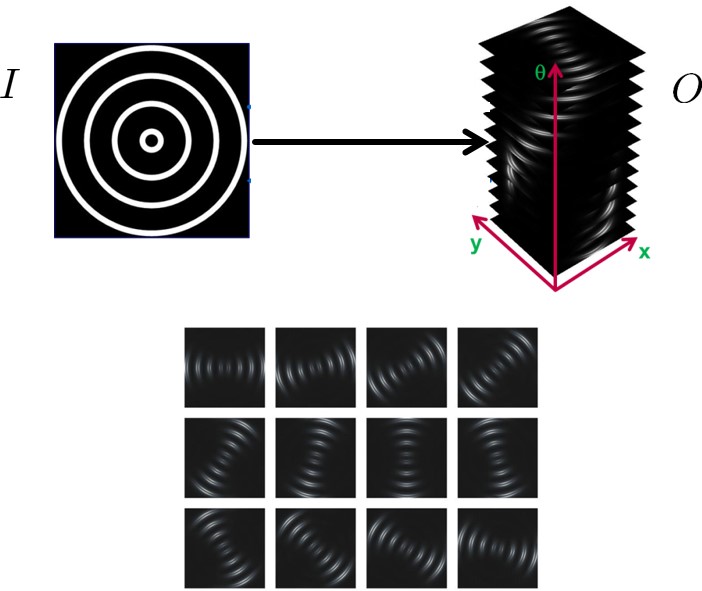}}
\caption{Lifting with a fixed scale is applied to an image $I$ and the full set of simple set responses (outputs) $O$ is obtained. The responses corresponding to each horizontal layer associated to an orientation angle is at the bottom.}
\label{fig:receptive_profile_full_response}
\end{figure}

\begin{equation}\label{eq:ourThetaCondition}
\overline{\theta}(x_0, y_0)= \frac{1}{2}\operatorname{arg}\Big(\INT_0^{\pi}\operatorname{Re}\Big\{O_{(\theta,\sigma)}(x_0,y_0)\Big\}\expp^{i\theta}d\theta\Big).
\end{equation}
We considered here  just the real part of the output but alternative choices are possible, like for example  the energy or the imaginary part of the output. We refer to \cite{sarti2009functional} for more information about such choices.

Lindeberg, in \cite{lindeberg1998feature}, describes a scale selection technique in terms of Gaussian derivatives normalized by scale. Basically the method finds extrema over scales corresponding to normalized receptive field responses by scale. A similar approach in our particular framework associated to Gabor functions is considered and scale selectivity is provided by the maximum of the output at the point $(x_0,y_0)$ over the scale fiber at the selected value of $\overline\theta$:
\begin{equation}\label{eq:ourSigmaCondition}
\overline{\sigma}(x_0, y_0)= \underset{{\sigma\in\R^+}}{\operatorname{argmax}}\Big(\operatorname{Re}\Big \{O_{(\overline{\theta}, \sigma)}(x_0,y_0)\Big\}\Big).
\end{equation}
Let us note that we employ maximum selectivity \eqref{eq:ourSigmaCondition},  for selecting the scale value, as Sarti et. al. did in \cite{sarti2008symplectic}, while we use the integration over fiber \eqref{eq:ourThetaCondition} in order to find the orientation preference over the fiber at the point $(x_0,y_0)\in M$. 
This procedure allows us to achieve a more robust orientation selectivity. Here we assume that generically there is a unique maximum, so that it is equivalent the order in which we select $\overline{\theta}$ and $\overline{\sigma}$. Note that the procedure  described by \eqref{eq:ourThetaCondition} and \eqref{eq:ourSigmaCondition} is done for  every fixed point $(x_0, y_0)$ on the retinal plane and the selected orientations $\overline{\theta}(x_0, y_0)$ and scales $\overline{\sigma}(x_0,y_0)$ are represented at the corresponding fiber locations $(x_0, y_0)\in M$.
In such a way we obtain the multi-scale orientation map $\overline{\theta}(x,y)$ that is represented in Figure \ref{fig:multiScale_FullProcedure2} with the same type of color map as in the classical case Figure \ref{fig:BoskingFigure}. The overall procedure for obtaining cortical maps is schematized in Figure \ref{fig:orientation_preference_map_diagram}. 

\graphicspath{{figures/JMIV_June_2017/}}
\begin{figure*}
\centerline{\includegraphics[scale=0.3,trim={0cm 0cm 0cm 0cm},clip]{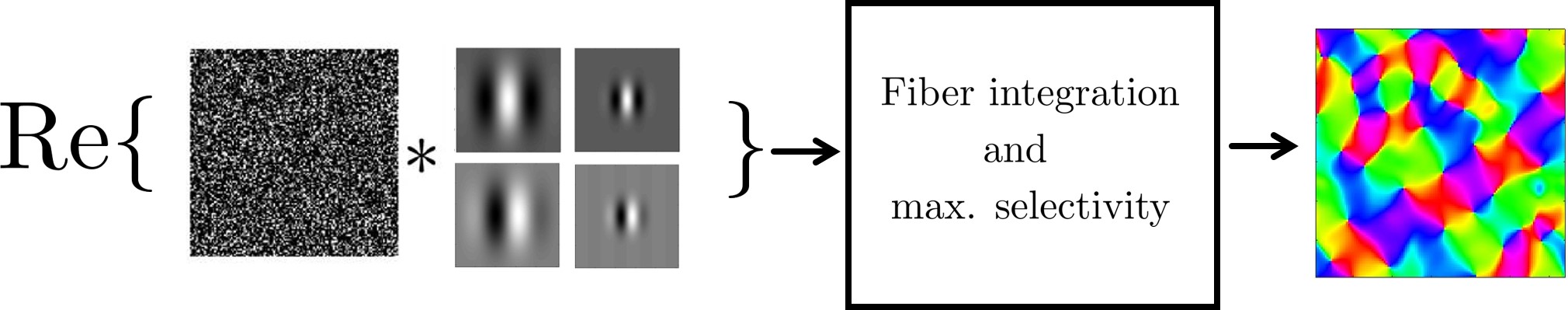}}
\caption{Image with white noise is filtered by Gabor filters with different scales. Integration over fiber and maximum selectivity among the integrated fiber values are applied on the real part of the filtering result. The orientation preference map is obtained by assigning a certain color to each orientation value.}
\label{fig:orientation_preference_map_diagram}
\end{figure*}
 

\begin{figure}
\centerline{\includegraphics[scale=0.5,trim={0cm 0cm 0cm 0},clip]{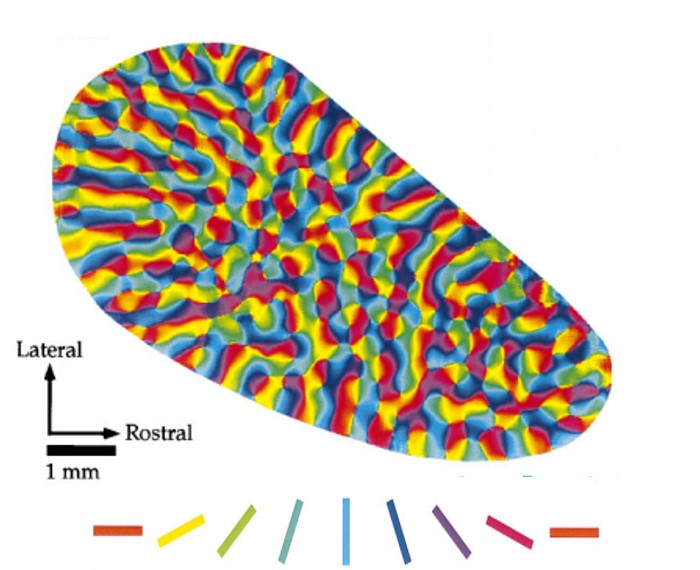}}
\caption{Orientation preference map taken from \protect\cite[Figure 1]{bosking1997orientation}. It was obtained via vector summation of data recorded for each angle by using optical imaging.}
\label{fig:BoskingFigure}
\end{figure}

This procedure corresponds to the lifting of a general stimulus $I(x,y) $ provided by simple cells circuitry. We explicitly note here that cortical orientation maps will be obtained by using the lifting of a random stimulus. This choice is motivated also by the fact that experimentally cortical maps arise in the early post natal period in absence of any visual experience just in presence of an intrinsic random basal activation (see the studies of Jegelka et. al. \cite{jegelka2006prenatal}, Tanaka et. al. \cite{tanaka2004roles}, Bednar and Miikkulainen \cite{bednar2004constructing}). A refinement of the orientation maps is performed subsequently by activation patterns based on random waves (see the results provided by Cang et. al. \cite{cang2005development}, Stellwagen and Shatz \cite{stellwagen2002instructive}).

Note that convolution with a Gabor filter will provide a smooth function. 
Indeed the Gabor is simply a Gaussian function multiplied by a complex exponential. 
The resulting function will then be a smooth function depending on the variance of the Gaussian, 
which is the scale. Finally the orientation selection will provide smooth functions, 
with values in $S^1$. It is well known that even harmonic function with values in $S^1$ develop 
vortices, which will be a model for the pinwheel.

It is natural to build feature cortical maps by means of Gabor functions, 
since they are strictly related to all the functional geometry. 
In fact we will see that they arise as minimizers of the uncertainty principle in this setting.

\section{Orientation maps as cortical Bargmann transforms}\label{sec:Gabor functions as minimizing coherent states of the uncertainty principle}
\label{sec:bargmannTransform}
\subsection{An uncertainty principle}

Orientation maps have been constructed by Barbieri et. al. in \cite{barbieri2012uncertainty} where 
an uncertainty principle related to the functional geometry of the cortex and its  non-commutative structure were used. 

%

The uncertainty principle in its general form always applies in presence of two self-adjoint non-commutating vector fields $P_1$ and $P_2$. 
In our framework, as given by Folland in \cite{folland2016harmonic}, it is written as the following:
\begin{proposition}\label{prop:uncertaintyStatement}
Let us denote $\mathcal{H}$ an Hilbert space endowed with the scalar product $\langle .\,,\, . \rangle$. Consider two self-adjoint vector fields $P_1$ and $P_2$ on $\mathcal{H}$. Then the following inequality holds:
\begin{equation}\label{eq:uncertaintyPrinciple}
\abs{\langle f, [P_1, P_2] f\rangle}\leq 2 \|P_1 f \| \|P_2 f\|,
\end{equation}
for all $L^2(\R^2)$ functions $f$ in the domain of $[P_1, P_2]$. 
\end{proposition}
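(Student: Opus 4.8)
The plan is to reduce \eqref{eq:uncertaintyPrinciple} to the Cauchy--Schwarz inequality, after using the self-adjointness of $P_1$ and $P_2$ to rewrite the commutator pairing $\langle f, [P_1,P_2]f\rangle$ as (twice the imaginary part of) the single inner product $\langle P_1 f, P_2 f\rangle$. This is the classical Robertson form of the Heisenberg uncertainty argument, adapted to the abstract Hilbert space $\mathcal{H}$.

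First I would expand the commutator, $\langle f, [P_1,P_2] f\rangle = \langle f, P_1 P_2 f\rangle - \langle f, P_2 P_1 f\rangle$, and use self-adjointness to move one operator across each slot: $\langle f, P_1 P_2 f\rangle = \langle P_1 f, P_2 f\rangle$ and $\langle f, P_2 P_1 f\rangle = \langle P_2 f, P_1 f\rangle$. Since $\langle P_2 f, P_1 f\rangle = \overline{\langle P_1 f, P_2 f\rangle}$ by conjugate symmetry of the scalar product, the difference collapses to $\langle P_1 f, P_2 f\rangle - \overline{\langle P_1 f, P_2 f\rangle} = 2i\operatorname{Im}\langle P_1 f, P_2 f\rangle$. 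Taking absolute values gives $\abs{\langle f, [P_1,P_2] f\rangle} = 2\,\abs{\operatorname{Im}\langle P_1 f, P_2 f\rangle} \le 2\,\abs{\langle P_1 f, P_2 f\rangle}$, and Cauchy--Schwarz, $\abs{\langle P_1 f, P_2 f\rangle} \le \|P_1 f\|\,\|P_2 f\|$, yields the claim. Equality analysis (proportionality of $P_1 f$ and $P_2 f$ together with the imaginary part being maximal) is not needed here but could be recorded as a remark.

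The algebra is routine, so the only genuine point of care is the domain bookkeeping: the identities $\langle f, P_j g\rangle = \langle P_j f, g\rangle$ and the rearrangement of $P_1P_2$, $P_2P_1$ require $f$ to lie in a common dense domain on which both compositions act, and this is precisely what "in the domain of $[P_1,P_2]$" should be taken to mean. I would note that in the concrete situation of this paper the operators in play are the left-invariant vector fields built from the Lie algebra of $G$ acting on, say, Schwartz functions, where this domain is manifestly available, so the abstract statement applies without further subtlety.
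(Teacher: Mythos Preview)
Your proof is correct and follows essentially the same route as the paper: expand the commutator, use self-adjointness to obtain $\langle f,[P_1,P_2]f\rangle = 2i\operatorname{Im}\langle P_1 f, P_2 f\rangle$, then bound by $2\lvert\langle P_1 f, P_2 f\rangle\rvert$ and apply Cauchy--Schwarz. Your added remarks on domain bookkeeping and the equality case are reasonable refinements but not required for the proposition as stated.
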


\begin{proof}
Since $P_1$ and $P_2$ are self-adjoint, we can write that
\begin{align}
\begin{split}
\langle f,\, [P_1,\, P_2]f \rangle= & \langle f,\, (P_1P_2-P_2P_1)f   \rangle\\
= & \langle P_1 f,\,P_2f \rangle- \langle P_2 f,\,P_1 f \rangle\\
= & 2i\operatorname{Im}\{\langle P_1 f, P_2 f \rangle  \}.
\end{split}
\end{align}
We employ the Cauchy-Schwarz inequality and write:
\begin{align}\label{CauchySch}
\begin{split}
\langle f,\, [P_1,\, P_2]f \rangle\leq 2\abs{ \langle P_1 f, P_2 f  \rangle} \leq 2 \|P_1 f \| \| P_2f \|.
\end{split}
\end{align}
\qed
\end{proof}

The first inequality in \eqref{CauchySch} becomes an equality when $\langle P_1 f, P_2 f \rangle$ is purely imaginary and the second one turns into equality when $P_1f$ and $P_2f$ are linearly dependent. As a consequence 
minima of uncertainty inequality (i.e., for that the inequality \eqref{eq:uncertaintyPrinciple} turns into equality) satisfy the following equation
\begin{equation}\label{eq:minimizerCondition}
P_1 f= i\lambda P_2 f .
\end{equation}
The condition $P_1 f= i\lambda P_2$ with $\lambda\in \R$ gives the minimizers which are called coherent states. 

\subsection{Gabors as minimizers of the uncertainty principle}\label{sec:Gabor functions as minimizing coherent states of the uncertainty principle}

We have seen that Gabor functions are defined on the 2-dimensional retinal plane, generated by the action on a mother filter of 
$T^{-1}_{(x_0,y_0,\theta,\sigma)}$ where $T_{(x_0,y_0,\theta,\sigma)}$ is defined in \eqref{eq:coordTransform}. Accordingly the differential of $T^{-1}$ sends the vector fields $X_1$ and $X_2$ acting in the 4-dimensional manifold of variables $(x,y,\theta, \sigma)$ to new vector fields $Y_1$ and $Y_2$ defined on the retinal plane  as:
\begin{align}
\begin{split}\label{eq:VFsY1_Y2}
(dT^{-1})(X_1)= & \partial_{\xi},\\ 
(dT^{-1})(X_2)= &\eta\partial_{\xi}-\xi\partial_{\eta}.
\end{split}
\end{align}  
It is well-known that Gabor filters are minimal of the uncertainty principle 
in the Heisenberg group, less known is that they are indeed also coherent states 
related to the non-commutating vector fields $Y_1$ and $Y_2$ induced by the functional architecture. As a result they  satisfy the analogous of \eqref{eq:minimizerCondition}. 
Precisely Gabor functions of the type $\Psi_{(x_0,y_0,\theta,\sigma)}$ given by \eqref{eq:theoreticalPartGabor} satisfy
\begin{align}
\begin{split}
Y_1\Psi_{(x_0,y_0,\theta,\sigma)}(x,y)=-2\xi \Psi_0(\xi,\eta),\\
Y_2\Psi_{(x_0,y_0,\theta,\sigma)}(x,y)=-i2\xi\Psi_0(\xi,\eta),
\end{split}
\end{align}
which fulfill \eqref{eq:minimizerCondition} for $\lambda=-1$.

\subsection{Interpretation of cortical maps as Bargmann transform of a random stimulus}

Let us recall that the operator associated to coherent states is 
called Bargmann transform: 

\begin{align}\label{eq:Bargmann}
\begin{split}
 (B^{SE(2)}I)(x,y,\theta, \sigma):=&
\langle \Psi_{(x_0,y_0,\theta, \sigma)},\; I\rangle_{L^2(\R^2)}\\
= & \INT_{\R^2}\Psi_{(x_0,y_0,\theta, \sigma)}(\xi, \eta)
I(\xi, \eta)d\xi, d\eta.
\end{split}
\end{align}

In particular, the response of simple cells, 
being defined by \eqref{eq:filteringExpressionFin}  as the convolution with a Gabor coherent state, 
can be interpreted as a Bargmann transform in the functional architecture: 
\begin{equation}\label{eq:output_bargmann} 
O_{(\theta, \sigma)}(x,y ) = (B^{SE(2)}I)(x,y,\theta, \sigma). 
\end{equation}
As a consequence, orientation maps are 
associated to Bargmann transform of a random stimulus.

\section{Comparison with previous models of cortical maps}

\subsection{Superposition of random waves}

One of the first models for construction of orientation preference maps is proposed by Petitot in \cite{petitot2003neurogeometry} where the map is obtained through the superposition of randomly weighted complex sinusoids
\begin{equation}
\SUM_{k=1}^{k=N}c_k\expp^{i2\pi\big(x\cos(2\pi k/N)+y\sin(2\pi k/N)\big)},
\end{equation}
with $N$ denoting the number of frequency samples and where the coefficients $c_k\in [0, 1]$ are the white noise. 

In this way the functional role of Gabor functions as receptive profiles is disregarded since the orientation map was constructed via direct superimposition of the waves with randomly generated magnitudes, avoiding that Gabors naturally process the stimulus by lifting it to the phase space of corresponding intrinsic variables. More specifically in this procedure the complex sinusoid functions are not localized while it is known from neurophysiological experiments that the orientation selectivity is performed locally by the simple cells (see for example the work of Field and Tolhurst \cite{field1986structure}).

\subsection{Bargmann transform of irreducible representations}

In the model proposed by Barbieri et. al. in \cite{barbieri2012uncertainty}, the orientation map is 
built starting from coherent states in the irreducible representation.
\begin{definition}
The representation of a group $G$ is a map $\Phi: G\rightarrow A(V)$, from the group $G$ to the space of automorphisms of a vector space $V$, such that $\Phi$ is compatible with the group law. The representation will be denoted by $(\Phi, V)$, and it is called \emph{irreducible} if it has no proper group subrepresentation $(\Phi, W)$, where $W$ is a subspace of $V$.
\end{definition}
 
Taking the Fourier transform of the vector fields $Y_1$ and $Y_2$ defined in 
\eqref{eq:VFsY1_Y2}, we obtain: 
\begin{equation}
\mathcal{F}(Y_1f)=iz_1\hat{f},\quad \mathcal{F}(Y_2f)=(z_2\partial_{z_1}-z_1\partial_{z_2})\hat{f}.
\end{equation}

We can write those vector fields also in terms of polar coordinates $(z_1,z_2)=(\Omega\cos(\varphi), \Omega\sin(\varphi))$ with $\Omega\in\R^+$ and $\varphi\in S^1$. In this case the fields become
\begin{equation}\label{eq:vectorFieldsInFourierSpace}
\hat{Y}_1\hat{f}=i\Omega\cos(\varphi)\hat{f},\quad \hat{Y}_2\hat{f}=\partial_{\varphi}\hat{f}.
\end{equation}

The vector fields $\hat{Y}_1$ and $\hat{Y}_2$ do not contain any radial derivative and only depend on the angular direction in the Fourier space.  Therefore they act independently on every circle, of arbitrary radius $\Omega$. Then it is possible to restrict the action of these vector fields to any circle with radius $\Omega$ on the Fourier space separately (see the explanations of Sugiura \cite{sugiura1990unitary} for details). 
This is the reason why the vector fields $Y_1, Y_2$ on the whole space (in the Fourier domain as well) are called reducible, 
while  $\hat{Y}_1$ and $\hat{Y}_2$ which cannot be further reduced once $\Omega$ is fixed, are called 
irreducible.

If we write the coherent state condition \eqref{eq:minimizerCondition} on the Fourier domain in terms of $\hat{Y}_1$ and $\hat{Y}_2$, 
\begin{equation}
\hat{Y}_1\hat{f}=i\lambda 	\hat Y_2\hat{f},
\end{equation}
we find the coherent states
\begin{align}
\begin{split}
\hat{\Psi}^{\Omega}_{(x_0,y_0,\theta,\sigma)}(\varphi)=\hat{\Psi}_{(x_0,y_0,\theta,\sigma)}(\Omega\cos(\varphi),\Omega\sin(\varphi)),\\
\end{split}
\end{align}
where $\hat{\Psi}_{(x_0,y_0,\theta,\sigma)}$ is the Fourier transform of the 
Gabor filters, while $\hat{\Psi}^{\Omega}_{(x_0,y_0,\theta,\sigma)}$ is a function of the angular variable, defined on the circle of radius $\Omega$.

In \cite{barbieri2011coherent} and \cite{barbieri2012uncertainty} Barbieri et. al. use the family of coherent states obtained for fixed value of $\sigma$, and for a single value of $\Omega$
\begin{equation}
\quad\quad\quad\quad\quad\quad\quad\quad\hat{\Psi}^{\Omega}_{(x_0,y_0,\theta)}.
\end{equation}

In perfect analogy with equation \eqref{eq:filteringExpressionFin} the Bargmann transform in these variables is expressed as the operator with kernel $\hat{\Psi}^{\Omega}_{(x_0,y_0,\theta)}$ as:
\begin{align}\label{eq:barbieriMethodPW_Construction_Init}
\begin{split}
B^{\Omega} g(x,y,\theta):=&
\langle \hat{\Psi}^{\Omega}_{(x_0,y_0,\theta)},\; g\rangle_{L^2(S^1)}\\
= & \INT_0^{2\pi}\hat{\Psi}^{\Omega}_{(x_0,y_0,\theta)}(\varphi)g(\varphi)d\varphi.
\end{split}
\end{align}

In \cite{barbieri2011coherent} and \cite{barbieri2012uncertainty}, this transform is applied to a white noise  $g$ defined on the annulus (on Fourier domain). 
For every point $(x,y)$ an orientation is selected by means of an integration analogous to the one expressed in \eqref{eq:ourThetaCondition}:

\begin{equation}\label{eq:orientationinfourier}
\overline{\theta}^{\Omega}(x,y)= \frac{1}{2}\operatorname{arg}\Big(\INT_0^{\pi}\Big\{ B^{\Omega} g(x,y,\theta)\Big\}\expp^{i\theta}d\theta\Big).
\end{equation}

In this way they find an orientation preference at point $(x,y)$ 
which depends on the fixed value of $\Omega$ and they obtain orientation preference maps (with no scale parameters).

Although both our model and the model proposed in \cite{barbieri2011coherent} and \cite{barbieri2012uncertainty} by Barbieri et. al. make use of the idea of Bargmann transform they differ on three points. 

Firstly our method employs coherent states corresponding to the reducible representations while the other one uses the states restricted to the irreducible representations in the Fourier domain. 

Secondly we start from a noise generated on the real domain and apply a Bargmann transform, while the other method introduces the noise in the Fourier domain on the irreducible representations, and apply the Bargmann transform in the Fourier space. The choice made in the present paper here is physiologically more plausible since experimentally cortical maps can arise in the early post natal period in absence of any external stimulus just in presence of a random basal activation (see Bednar and Miikkulainen \cite{bednar2004constructing} and Jegelka et. al. \cite{jegelka2006prenatal}). 
The present model has the potential to provide a reasoning and an explanation of how the formation of cortical maps occurs at the neurophysiological level.

The third main difference is that the present model can also consider scale selectivity while in the other model the scale is fixed. More generally it is possible to extend the present model in order to include other visual features by using higher dimensional Gabor functions.

\section{Experiments}
\label{sec:secExperiments}
\label{sec:Experimental framework}\label{sec:Experimental_framework}

We consider a stimulus $I(x,y)$ of  $128\times 128$ pixels with random values generated from a uniform distribution over $[-1, 1]$ at each pixel.

We obtain the total set of simple cell responses via the linear filtering of the test image with rotated and translated Gabor filter bank as described in \eqref{eq:filteringExpressionFin} with different scale values $\sigma$. Then we represent the selected orientation $\overline{\theta}(x,y)$ and $\overline{\sigma}(x,y)$, via \eqref{eq:ourThetaCondition} and \eqref{eq:ourSigmaCondition}, at every point $(x,y)$ on the $128\times 128$ image plane.

Previously in the literature it was reported from the physiological experiments of Bosking \cite{bosking1997orientation} (see Figure \ref{fig:BoskingFigure}) that the orientation preference map had certain characteristics (see the explanations of Bressloff and Cowan \cite{bressloff2003spherical}, and Petitot \cite[p.27]{citti2014neuromathematics}, \cite[p.87]{petitot2003neurogeometry}). To begin with, orientation preferences on the map are distributed almost continuously across the cortex and the pinwheel architecture is crystalline-like. In other words there is a regular lattice of pinwheels on the orientation preference map with a certain spatial periodicity. Furthermore the orientation map contains three types of points as described by Petitot \cite[p.87]{petitot2003neurogeometry}, namely : a) Regular points around which the orientation iso-lines are parallel (the zones with regular points are called \emph{linear zones}), b) Singular points which are located at the center of the pinwheels (Those singularities might have positive or negative \emph{chirality}. That is, when we turn around a pinwheel in the clockwise direction, the orientations turn in the clockwise direction - positive chirality - or in the counter-clockwise direction - negative chirality. The pinwheels represent opposite chiralities when they are adjacent to each other), c) Saddle points at the center of regions where iso-orientation lines bifurcate (the case where two iso-orientation lines start from the same pinwheel and arrive at opposite pinwheels).

We will see that in the present study, we are able to produce all the three kinds of points. In the first experiment we consider different fixed scales and apply \eqref{eq:filteringExpressionFin} and \eqref{eq:ourThetaCondition} to obtain orientation maps: Results are shown in Figure \ref{fig:pinwheels_With_VFs}, where orientation maps are visualized and the three kinds of points are outlined. 

\begin{figure}
\centerline{\includegraphics[scale=0.3,trim={0cm 0cm 0cm 0cm},clip]{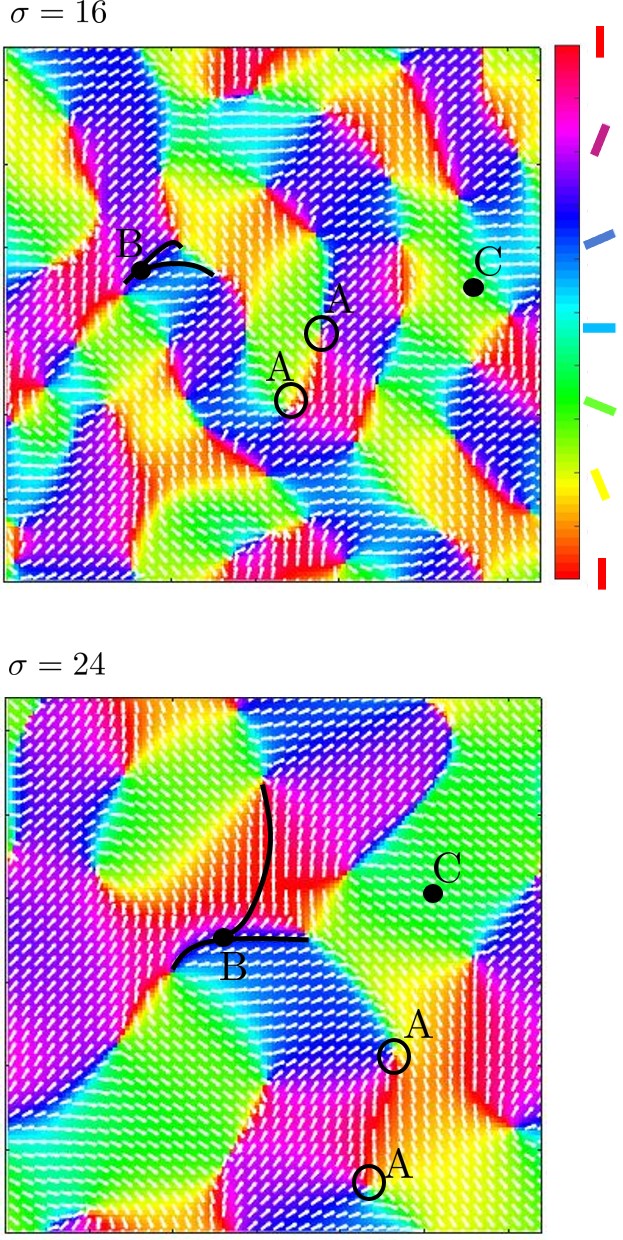}}
\caption{Orientation preference maps obtained through our model with scale $\sigma=16$ (top) and $\sigma=24$ (bottom), adjacent pinwheels with opposite chiarilities (points A), saddle points (points B) and linear zones (points C) represented by a single color. White lines represent the orientation correspondence at each point.}
\label{fig:pinwheels_With_VFs}
\end{figure}

Figure \ref{fig:periodicity_Computation_Figure} shows the cross-correlation between simulated cortical maps  where several picks are present to testify the crystalline structure of the map. Notice that the periodicity of the peaks is linearly dependent on the scale of Gabor filters employed for the construction of the map.

\begin{figure}
\centerline{\includegraphics[scale=0.25,trim={0cm 0cm 0cm 0cm},clip]{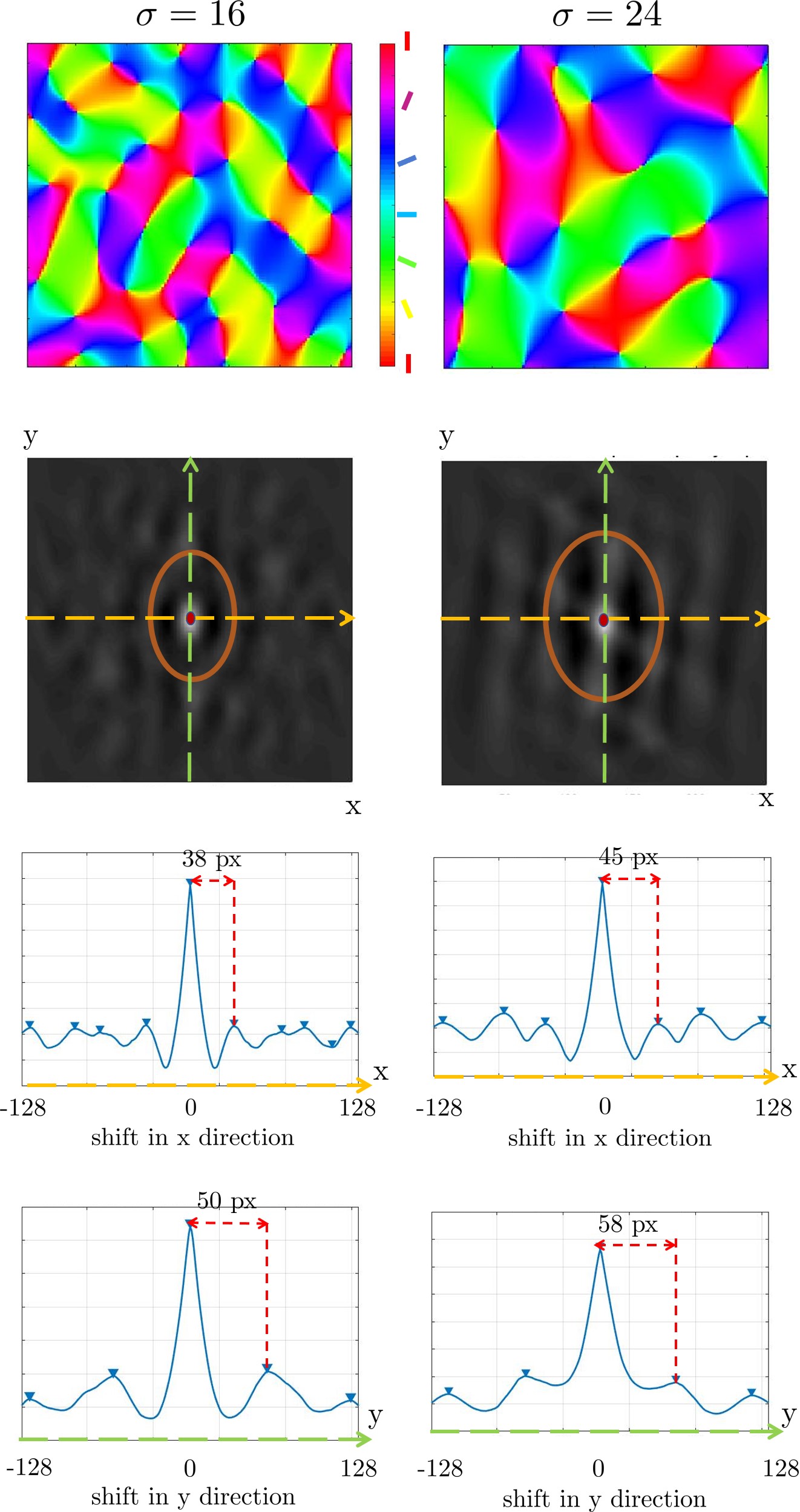}}
\caption{Results obtained by Gabors of scale values $\sigma=16$ (left column) and $\sigma=24$ (right column). Top: Orientation preference maps, Second row: Orientation preference map cross-correlations. The average of the vertical and horizontal axes of ellipses (orange) representing the second peak values around the first peak corresponding to the exact match due to no shift indicates the spatially periodic configuration of pinwheel grid structure of the orientation preference maps. Third row: Cross-correlation values with respect to the shifts in x direction along the profile line (orange dashed arrow in the second row). Bottom: Cross-correlation values with respect to the shifts in y direction along the profile line (green dashed arrow in the second row). Finally the spatial shift corresponding to the second peaks for $\sigma=16$ is found as $44$ pixels while for $\sigma=24$ it is $52$ pixels approximately.}
\label{fig:periodicity_Computation_Figure}
\end{figure}

The size of the pinwheel structure is also strictly correlated to the scale of the Gabor filters, as shown in Figure \ref{fig:multiscale_pinwheels2_Paper}.

\begin{figure}
\centerline{\includegraphics[scale=0.2,trim={0cm 0cm 0cm 0cm},clip]{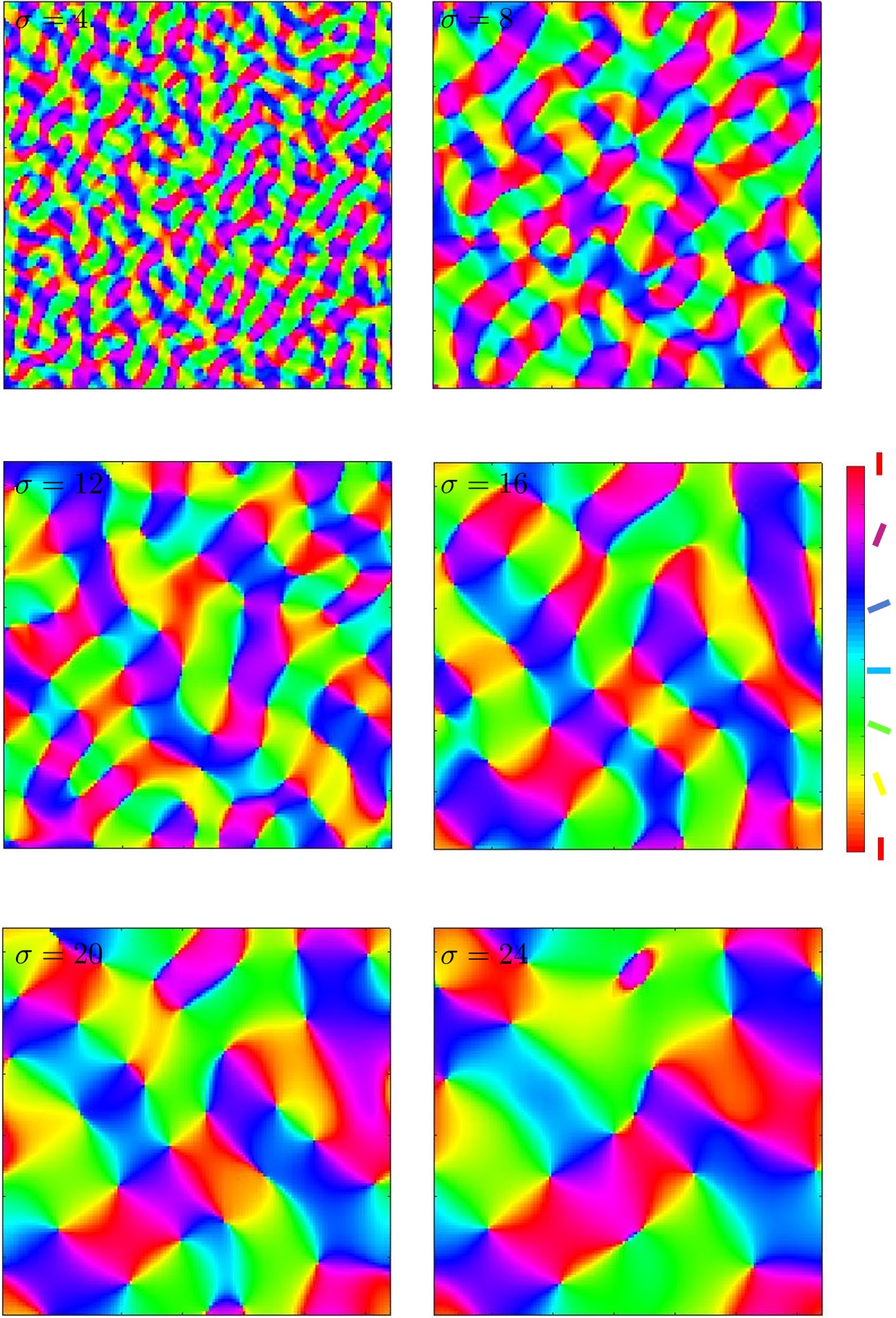}}
\caption{Orientation maps obtained with Gabors of scales $\sigma=4$ (top left), $\sigma=8$ (top right), $\sigma=12$ (middle left), $\sigma=16$ (middle right), $\sigma=20$ (bottom left), $\sigma=24$ (bottom right) in pixels.}
\label{fig:multiscale_pinwheels2_Paper}
\end{figure}

\begin{figure}
\centerline{\includegraphics[scale=0.25,trim={0cm 0 0 0},clip]{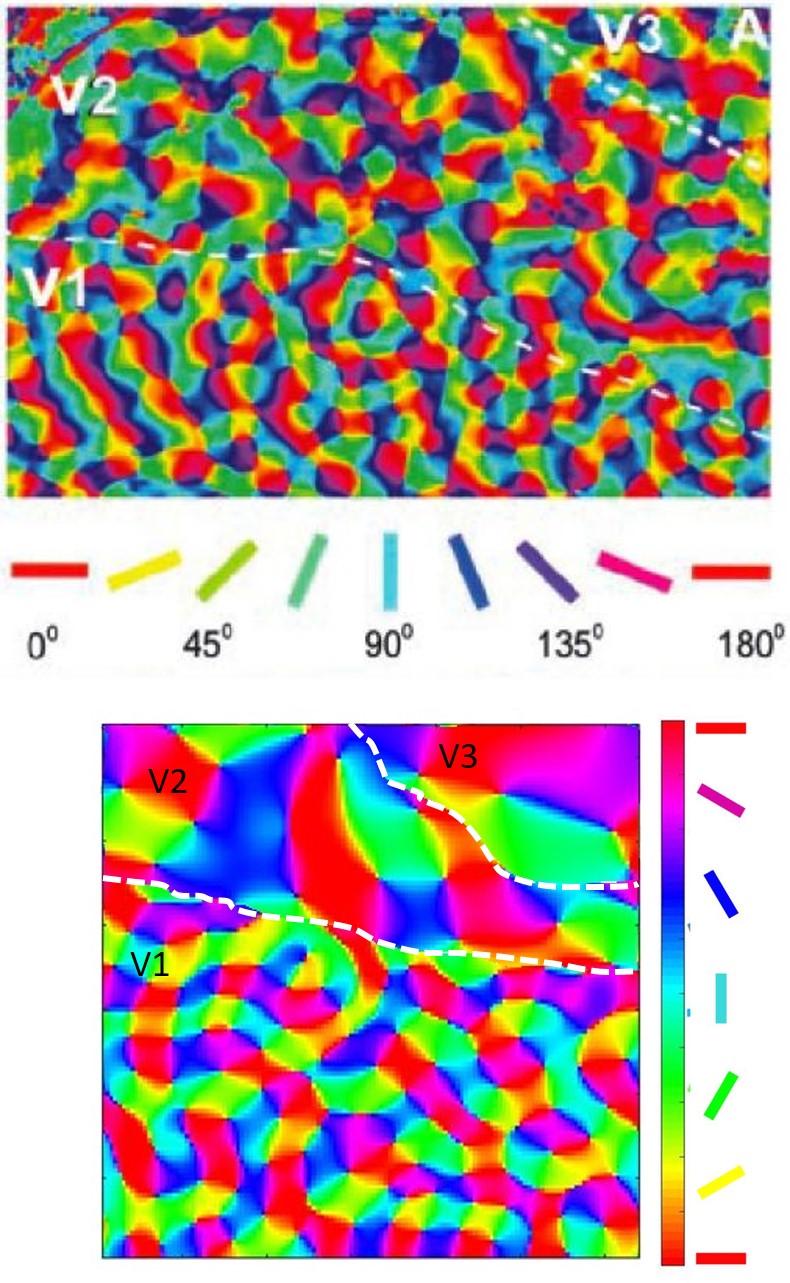}}
\caption{Top: The original neurophysiological results taken from \cite[Figure 37]{petitot2008neurogeometrie}. As one passes through V1-V2-V3 the size of the simple cells increase and the lattice of the orientation map extends while the pinwheels are preserved, Bottom: The simulation results obtained via our model. The model is able to produce the same type of orientation maps, which preserve the pinwheels through different cortex layers, by simply changing the scale of the Gabor filter bank.}
\label{fig:petitotComparisonFigure}
\end{figure}

Let us note that as one passes through V1-V2-V3 areas of the cortex the sizes of the simple cells increase and the lattice of the orientation map extends while the pinwheels are preserved, as visualized on the top of Figure \ref{fig:petitotComparisonFigure}. Our simulations are able to reproduce the same type of orientation maps, which preserve the pinwheels through different cortex layers, by simply changing the scale of the Gabor filter bank as shown on the bottom of Figure \ref{fig:petitotComparisonFigure}.

In the next series of experiments we will compute the orientation maps by selecting at every point orientation and scale by using the three equations \eqref{eq:filteringExpressionFin}, \eqref{eq:ourThetaCondition} and \eqref{eq:ourSigmaCondition}. 
This case is the closer one to the physiological situation of a normal visual cortex, where cells with different orientations and sizes are present.
In Figure \ref{fig:multiScale_FullProcedure2} the relevant simulation result of the model is visualized, showing the orientation map rendering both orientation and scale selectivity.

In the final experiment, which is given in Figure \ref{fig:petitotMultiScale}, we used the same procedure as in Figure \ref{fig:multiScale_FullProcedure2} but using three different sets of scales and we obtained a result similar to Figure \ref{fig:petitotComparisonFigure}. This procedure is closer to the real receptive field composition of the primary visual cortex. 

\begin{figure}
\centerline{\includegraphics[scale=0.3,trim={0cm 0cm 0cm 0cm},clip]{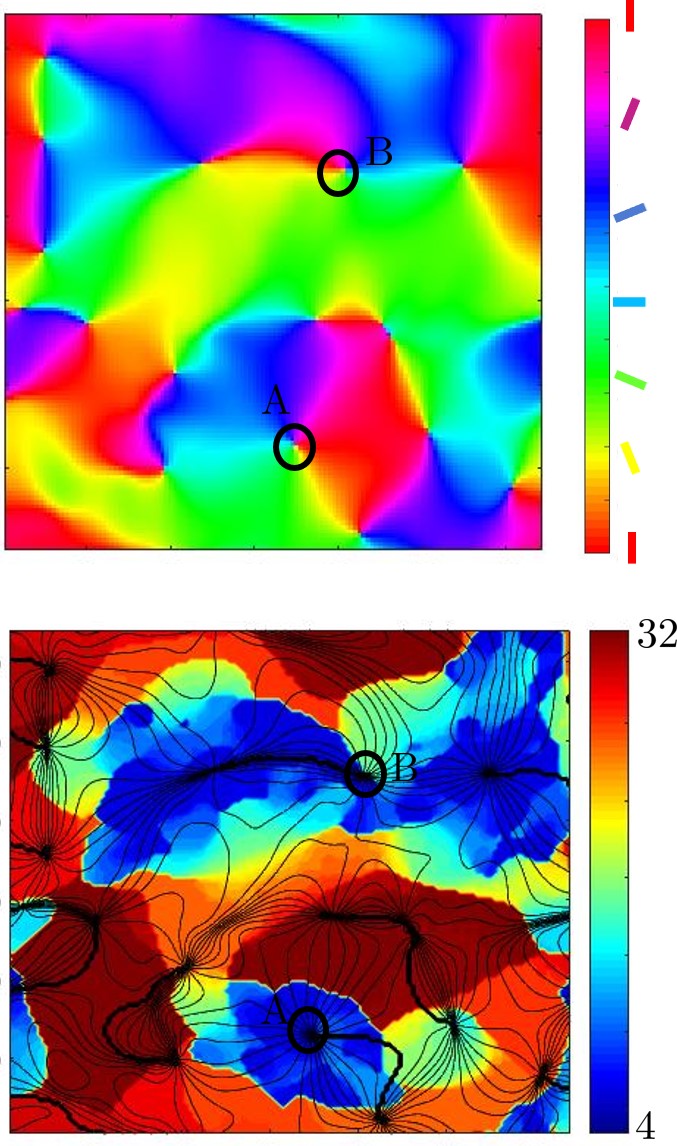}}
\caption{Top: Orientation preference map obtained through our model based on the procedure based on \eqref{eq:filteringExpressionFin}, \eqref{eq:ourThetaCondition} and \eqref{eq:ourSigmaCondition} with scale set $\{4,4.5,5,\dots 32\}$ on which maximum selectivity over the scale set is applied. Bottom: The corresponding scale map where each color indicates a certain scale value found by \eqref{eq:ourSigmaCondition} and black curves represent the iso-orientation lines.}
\label{fig:multiScale_FullProcedure2}
\end{figure}

\begin{figure}
\centerline{\includegraphics[scale=0.4,trim={0cm 0cm 0cm 0cm},clip]{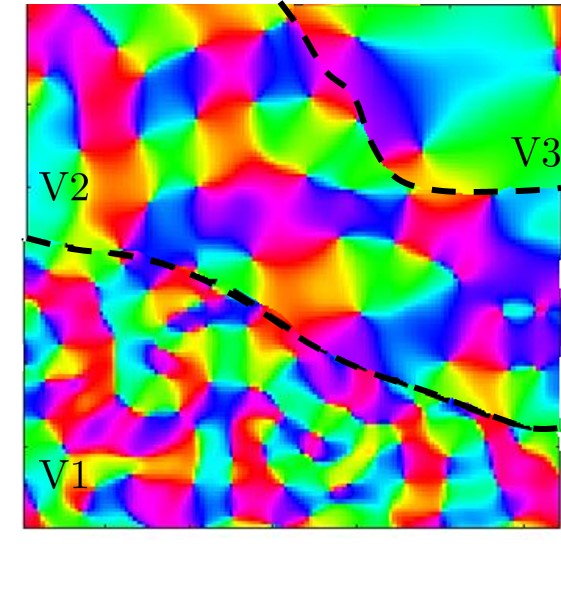}}
\caption{Orientation preference maps obtained through our model using the procedure based on \eqref{eq:filteringExpressionFin}, \eqref{eq:ourThetaCondition} and \eqref{eq:ourSigmaCondition} with scale sets $\{4,4.5,5,\dots 8\}$ for V1, $\{4,4.5,5,\dots 16\}$ for V2 and $\{4,4.5,5,\dots 32\}$ for V3.}
\label{fig:petitotMultiScale}
\end{figure}

\section{Conclusion}

In this paper we presented a new model for the generation of orientation preference maps in the primary visual cortex, considering both orientation and scale features.
We considered modeling the functional architecture of the primary visual cortex by taking into account orientation and scale features and using a framework inspired by Sarti et. al. \cite{sarti2008symplectic}. Furthermore, we also provided the physical reasoning behind the choice of the generalized Gabor function by showing that it is a coherent state of the non-commutative framework corresponding to the cortex functional architecture.
The intrinsic variables of orientation and scale constitute a fiber on each point of the retinal plane and the set of receptive profiles of simple cells is located on the fiber.  Orientation preference maps are then obtained simply as the lifting of a noise stimulus by a set of Gabor filters, mapping the orientation value on the 2-dimensional plane. This corresponds to a Bargmann transform in the reducible representation of the $\se$ group which is followed by a maximum response selection procedure. A comparison has been provided with a previous model based on the Bargmann transform in the irreducible representation of the $\se$ group, outlining that the new model is more physiologically motivated.
From simulation results appears that this technique is able to reproduce cortical maps of different areas with morphological characteristics comparable to experimental data.
A clear advantage of the method consists also in its versatility since a number of different features could be considered, such as frequency and phase. Further studies will be conducted in this direction in the next future.

\FloatBarrier

\bibliographystyle{spmpsci}
\bibliography{JMIV_Bib}

\end{document}